\DeclareMathOperator{\soo}{so}
\def\Lie{\mathcal{L}}
\newtheorem{theorem}{Theorem}[section]
\newtheorem{lemma}[theorem]{Lemma}
\newtheorem{prop}[theorem]{Proposition}
\theoremstyle{definition}
\newtheorem{df}{Definition}[section]
\newcommand{\pb}[1]{\hbox{\lower0.5ex\hbox{${}_{\leftarrow}$}}\kern-1.9ex{#1}}
\def\={\hat{=}}
\def\be{\begin{equation}}
\def\ee{\end{equation}}
\def\ba{\begin{eqnarray}}
\def\ea{\end{eqnarray}}
\def\SO(3){\rm SO(3)}
\def\so(3){\rm so(3)}
\def\SO(4){\rm SO(4)}
\def\so(4){\rm so(4)}
\def\SO(1,4){\rm SO(1,4)}
\def\so(1,4){\rm so(1,4)}
\def\SU(2){\rm SU(2)}
\begin{document}
\title{Symmetries of the asymptotically de Sitter spacetimes}
\author{Wojciech Kamiński, Maciej Kolanowski and Jerzy Lewandowski}
\affiliation{Institute of Theoretical Physics, Faculty of Physics, University of Warsaw, Pasteura 5, 02-093 Warsaw, Poland}
\begin{abstract} \noindent
    We start a systematic investigation of possible isometries of the asymptotically de Sitter solutions to Einstein equations. We reformulate the Killing equation as conformal equations for the initial data at $\mathcal{I}^+$. This allows for partial classification of possible symmetry algebras. In particular, if they are not maximal, they may be at most $4$-dimensional. We provide several examples. As a simple collorary it is shown that the only spacetime in which the Killing horizon intersects $\mathcal{I}^+$ (after a conformal completion) is locally the de Sitter universe.
\end{abstract}
\maketitle
\section{Introduction} \noindent
It is quite well-established nowadays that we live in the universe with a positive cosmological constant \cite{riess1998observational}. Despite that, we are still lacking the universal framework for the description of the gravitational radiation and different astrophysical phenomena when $\Lambda > 0$. Not surprisingly, the last few years witnessed a growing interest in the topic -- many different approaches, definitions and solutions were presented (see for example \cite{kastor2002positive, Penrose:2011zza, Ashtekar:2015lla, Szabados:2015wqa, Chrusciel:2020rlz, Ashtekar:2014zfa, He:2015wfa, Poole:2018koa, Compere:2019bua, Ashtekar:2015lxa, Bishop:2015kay, Date:2015kma, Kolanowski:2020wfg, Kolanowski:2021hwo}. The fundamental reason for the difficulty of the task lies within the nature of the null infinity $\mathcal{I}^+$. When $\Lambda>0$, the null infinity is spacelike and thus there is no notion of the asymptotic time translation and subsequently of the positive-definite hamiltonian. Moreover, $\mathcal{I}^+$ carries no natural structure besides that of a smooth manifold which leads to the conclusion that all diffeomorphisms of $\mathcal{I}^+$ are asymptotic symmetries which renders the notion rather useless. We do not plan to offer solution to those hard and important issues. The scope of this note is far more modest. We aim to begin a systematic investigations of the isometries of the asymptotically de Sitter spacetimes. Of course, metrics with many symmetries are rather rare and cannot properly describe physical processes like e.g. merger of two black holes. Nevertheless, they may be quite useful, for example as a starting point for the perturbation theory.\\
In this work we will be concerned with the vacuum asymptotically de Sitter spacetimes. We will show that they may admit (locally) $10$-dimensional algebras of symmetries (in which case it is locally de Sitter) or the algebra is $4, 3,2,1$-dimensional in which case it can be chosen to act by isometries on the boundary data. Moreover, we will classify all maximally symmetric and almost maximally symmetric solutions with $\mathcal{I}^+ = S^3$ -- they corresponds to the de Sitter and Taub-NUT-de Sitter spacetimes, respectively. Although we do not have a full classification of all possible 3d algebras, such classification seem to be feasible (although rather long - in particular, all Bianchi algebras would appear as special cases). \\
Most approaches to the asymptotically de Sitter spacetimes rely heavily {\it nomen omen} on the asymptotic behavior of the fields and on the existence of the null infinity $\mathcal{I}^+$. We are going to use this structure to our benefit as well. However, let us mention a different paradigm. In \cite{Ashtekar:2019khv} it was proposed that one could use (certain generalization of) the cosmological horizon as a local null infinity. Further progress in this direction was presented in \cite{Ashtekar:2021kqj, Ashtekar:2021wld} where the BMS-like symmetry group of such horizons were identified and used to calculate physical quantities (multipole moments, charges and their fluxed) within the perturbation theory. In the context of our current investigations it is rather natural to ask about the Killing horizons associated with the symmetries described in the previous paragraph. One striking property of the de Sitter spacetime is the fact that each Killing horizon can be identified with a lightcone emanating from a point on $\mathcal{I}^+$ in a conformally completed spacetime. The question that arises is how common is the presence of a Killing horizon intersecting $\mathcal{I}^+$ in the asymptotically de Sitter spacetimes. This is the next issue we address in this paper and find its precise solution -- this property is unique to the de Sitter spacetime. Of course, it does not mean that different solutions do not have Killing horizons (the simplest example being the Schwarzschild-de Sitter black hole) but rather, using a picturesque language, that 'point of intersection' lies at the 'infinity of $\mathcal{I}^+$. This is heavily tied with the topological properties of $\mathcal{I}^+$ for different solutions, as analyzed in many examples in \cite{Ashtekar:2014zfa}. \\
Incidentally, both questions can be analyzed in the same language - by the extensive usage of the conformal properties of the Cauchy data. In fact, we will start with the latter problem simply because it is easier and it naturally sets up the stage for the classification.
\section{Killing vector fields at $\mathcal{I}^+$} \noindent
A prerequisite for a Killing horizon to exist is the existence of a Killing vector field (KVF) itself. Thus, we will start by showing how they can be read off from the initial value formulation in the spirit of Friedrich \cite{friedrich1986}. Asymptotically de Sitter spacetimes can be put into the Fefferman-Graham gauge:
\begin{equation}\label{gFG}
    g = -\frac{\ell^2 d\rho^2}{\rho^2} + \frac{\ell^2 q_{ab} dx^a dx^b}{\rho^2},
\end{equation}
where $q_{ab} = q_{ab}(\rho, x^c)$ is smooth and $\ell = \sqrt{\frac{3}{\Lambda}}$. The null infinity $\mathcal{I}^+$ corresponds to $\rho = 0$. Let us denote
\begin{equation}
    q_{ab}^{(n)} = \frac{1}{n!}\frac{\partial^n}{\partial\rho^n} q_{ab}|_{\rho=0}.
\end{equation}
From the vacuum Einstein equations, 
\begin{equation}\label{EL}
  R_{\alpha\beta}=\Lambda g_{\alpha\beta}, \ \ \ \ \ \Lambda = \frac{3}{\ell^2} 
\end{equation}
the following constraints follow
\begin{align}
    q^{(1)}_{ab} &= 0 \label{q1} \\
    q^{(2)}_{ab} &= \mathring{R}_{ab} - \frac{1}{4}\mathring{R} q_{ab}^{(0)} \label{q2} \\
    q^{(0)ab}q^{(3)}_{ab} &= 0 \label{cons1} \\
    D^a q^{(3)}_{ab} &=0, \label{cons2}
\end{align}
where $\mathring{R}_{ab}$ and $\mathring{R}$ are Ricci tensor and scalar (respectively) of $q^{(0)}$ and $D$ is its covariant derivative. All $q^{(n)}_{n>3}$ are recursively given as functions of $q^{(0)}$ and $q^{(3)}$. One can easily notice that $q^{(2)}$ is twice the Schouten tensor of $q^{(0)}$. Notice that since $q_{ab}$ is merely smooth, it is only an asymptotic  expansion and not necessary a convergent series.  Nevertheless, it follows from \cite{friedrich1986} that the proper initial data at $\mathcal{I}^+$ are classes $[(q^{(0)}, q^{(3)})]$, where two pairs are equivalent if they are connected by a (non-singular) conformal transformation:
\begin{equation}\label{data}
    (q^{(0)}, q^{(3)}) \sim (\omega^2 q^{(0)}, \omega^{-1} q^{(3)}). 
\end{equation}
One can easily checked that the constraints for $q^{(3)}$ are conformally covariant.
To agree with a standard notation, we will introduce a holographic energy-momentum tensor
\begin{equation}
    T_{ab} = \frac{\sqrt{3\Lambda}}{16\pi G}q^{(3)}_{ab}.
\end{equation}
Since we are mainly interested in the behavior \emph{at} $\rho = 0$, it is convenient to introduce an unphysical metric
\begin{equation}
    \hat{g} = \ell^{-2}\rho^2 g,
\end{equation}
which is smooth at all values of $\rho$. \\
We are looking for the KVFs $X$:
\begin{equation}
    \mathcal{L}_X g_{ab} = 0.
\end{equation}
This equation can be rewritten in a more convenient way as
\begin{equation}
    \mathcal{L}_X \hat{g}_{ab} = \frac{2\mathcal{L}_X \rho}{\rho} \hat{g}_{ab}. \label{Killing_eq}
\end{equation}
Let us evaluate first its $\rho \rho$ component:
\begin{equation}
    X^\rho_{\rho} = \rho^{-1} X^\rho.
\end{equation}
The solution is immediate:
\begin{equation}
    X^\rho = \rho \mathring{X}^\rho,
\end{equation}
where $\mathring{X}^\rho_{,\rho} = 0$. Thus, it follows that a KVF must be tangent to $\mathcal{I}^+$. Let $\xi \in \Gamma(T\mathcal{I}^+)$ denotes restriction of $X$ to $\mathcal{I}^+$. We will now solve the remaining equations perturbatively in $\rho$. Evaluating $ab$ components of \eqref{Killing_eq} at $\rho =0$ gives
\begin{equation}
    \mathcal{L}_\xi q^{(0)}_{ab} = 2 \mathring{X}^\rho q^{(0)}_{ab}.
\end{equation}
Thus, we see that $\xi$ is a conformal Killing vector field (CKVF) of $q^{(0)}$ and 
\begin{equation}
    \mathring{X}^\rho = \frac{1}{3} D_a \xi^a.
\end{equation}
$\rho a$ components of \eqref{Killing_eq} gives
\begin{equation}
    X^b_\rho q_{ab} - X^\rho_{,a} = 0,
\end{equation}
which can be immediately solved
\begin{equation}
    X^b = \xi^b + \int_0^\rho \rho' q^{ab}(\rho', x^c) d\rho' \left( D_m \xi^m \right)_{,a}.
\end{equation}
Thus, every KVF is uniquely defined by its restriction to $\mathcal{I}^+$:
\begin{equation}
    X = \frac{1}{3}\rho  D_m \xi^m \partial_\rho + \left( \xi^b + \int_0^\rho \rho' q^{ab}(\rho', x^c) d\rho' \left( D_m \xi^m \right)_{,a} \right) \partial_b. \label{killing}
\end{equation}
Of course, even when $\xi$ is a CKVF of $q^{(0)}$, \eqref{killing} is not necessarily going to be a KVF of $g$. To make sure that it is so, we need to evaluate \eqref{Killing_eq} in higher orders of $\rho$ -- to this end we will differentiate it with respect to $\rho$. The first derivative of both sides automatically vanish since $q^{(1)} = 0$. The second one reads:
\begin{equation}
    \mathcal{L}_\xi q^{(2)}_{ab} = - \frac{1}{3} D_{(a}D_{b)} D_m \xi^m
\end{equation}
which is just a geometrical identity whenever $q^{(2)}$ satisfies \eqref{q2}. Notice that $X^a_{,\rho}|_{\rho =0} = X^a_{,\rho \rho \rho}|_{\rho =0} = 0$ due to the form \eqref{killing} and \eqref{q1}. Thanks to that, the third derivative of \eqref{Killing_eq} is extremely simple, and it reads
\begin{equation}
    \mathcal{L}_\xi T_{ab} = - \frac{1}{3} D_m \xi^m T_{ab}. 
\end{equation}
This should not be a surprise since $(q_{ab}, T_{ab})$ are defined up to a common conformal transformation. One can show that if $\xi$ satisfies conditions derived above, then an associated $X$ given by \eqref{killing} is truly the KVF of $g$. \\
Indeed, let us consider one parameter family of the local initial data
\begin{equation}
     (\omega_s^2 \phi_s^*q^{(0)}, \omega_s^{-1} \phi_s^*q^{(3)}),\quad \omega_0=1,\ \phi_0={\rm id}
\end{equation}
for some conformal factors $\omega_s$ and local diffeomorphisms $\phi_s$. Uniqueness results of \cite{friedrich1986} show that there exists one parameter family of diffeomorphisms $\Phi_s$ of the spacetime which transform the development of the data at $s$ into the one for $s=0$. Suppose that $X=\frac{d\phi_s}{ds}$ is CKVFs symmetry of the initial data. We can locally integrate it and choose a conformal factor such that $(\omega_s^2 \phi_s^*q^{(0)}, \omega_s^{-1} \phi_s^*q^{(3)})$ is independent of $s$. In this situation, the derivative $\frac{d\Phi_s}{ds}$ defines the corresponding  KVF of the spacetime. \\
We have thus reformulated the problem of finding isometries to the problem of identifying conformal symmetries of the initial data:
\begin{align}
    \begin{split}
        \mathcal{L}_\xi q_{ab} &= 2 \omega q_{ab} \\
        \mathcal{L}_\xi T_{ab} &= - \omega T_{ab}.
    \end{split}
\end{align}
We will use those equations extensively in the following sections.
\section{Killing horizons} \noindent
Thus far, we have only focused on the existence of KVFs for themselves. However, our main object of interest is a \emph{Killing horizon} $H$, it means such a null hypersurface that $X$ is its null normal. Moreover, we assume that after a conformal completion, an intersection of $H$ and $\mathcal{I}^+$ is non-empty, just as it has place in the de Sitter spacetime. Since $X$ is tangent to $\mathcal{I}^+$, it is spacelike and its length (say, in $\hat{g}$ metric) is nonnegative. On the other hand, its length on $H$ vanishes. Thus, at the (non-empty) intersection, $X$ must vanish. \\
All Killing horizons are nonexpanding, it means all cross-sections of $H$ have the same area\footnote{We assume that $H$ has a product topology $\mathbb{R} \times K$, where $K$ is a compact surface, and thus it makes sense to talk about a cross-section's area. An example of a situation in which it does not hold and one has well-defined $\mathcal{I}^+$ can be given by the Kerr-Taub-NUT-de Sitter. Then, the horizon is topologically $S^3$. Nevertheless, even in this example the horizon is separated from from $\mathcal{I}^+$ \cite{Lewandowski:2021qty}.}. In particular, we can take $\rho = \textrm{const.}$ cross-sections and their area (in $g$ metric) is $\rho$-independent. On the other hand, their area in $\hat{g}$ decreases as $\rho^2$. In particular, it vanishes on $\mathcal{I}^+$. Thus, $H \cap \mathcal{I}^+$ cannot be a $2$-surface, so it is either a curve or a point\footnote{Our considerations are local on $\mathcal{I}^+$, and so we can focus only on one connected component of the intersection. From that point, we will assume without loss of generality $H \cap \mathcal{I}^+$ to be connected}. As we will show, in a moment, the first case is excluded. \\ 
Let us first assume that $\xi$ is a KVF of $q^{(0)}$. Then, it follows from \eqref{killing} that
\begin{equation}
    X = \xi^a \partial_a,
\end{equation}
and so it is spacelike everywhere (or vanishing) in the domain in which coordinates $(\rho, x^c)$ are well-defined. In particular, it cannot have a horizon. As a next step, let us assume that it is not a KVF, but there exists a positive function $\omega$ on a neighborhood of $H \cap \mathcal{I}^+$ such that 
\begin{equation}\label{essential-xi}
    \mathcal{L}_\xi \omega^2 q^{(0)} = 0.
\end{equation}
Using terminology from \cite{frances2012local} we introduce a definition:

\begin{df}
Conformal vector field $\xi$ is called non-essential if there exists nonzero $\omega$ satisfying \eqref{essential-xi}. Otherwise, it is called essential.
\end{df}
We will also use local version of these definitions. Vector field is essential at $x$ if it is essential for every neighborhood of $x$.

\noindent
Such $\xi$ are called non-essentials CKVF. If $\xi$ is non-essential, we can consider a spacetime $(M', g')$ with an initial data $(\omega^2 q^{(0)}_{ab}, \omega^{-1} T_{ab})$ which is diffeomorphic to the one we considered so far. It follows that $\xi$ is spacelike on $(M', g')$ and so also on $(M,g)$. Thus, if $\xi$ is a non-essential CKVF of $q^{(0)}$, it does not have a Killing horizon intersecting with $\mathcal{I}^+$ and we are left only with an essential case. 
Fortunately, if a Riemannian manifold possess an essential CKVF for the point $x_0$, it is locally conformally flat  around this point \cite{frances2012local}.
Without loss of generality, we may assume that $q^{(0)}_{ab} = \delta_{ab}$. One can easily check that if $\xi$ is an essential CKVF on every neighborhood of a given point, then this point is an isolated zero of $\xi$. Thus, as promised, it follows that (each connected component of) $H \cap \mathcal{I}^+$ is a point.
\subsection{Holographic energy--momentum tensor} \noindent
Thus far, we were able to establish that $q^{(0)}$ is conformally flat. We assumed without loss of generality that it is simply flat around $H \cap \mathcal{I}^+$. The only remaining part of the initial data is thus $T_{ab}$. \\
It is easier to deal with scalars rather than tensors, so let us introduce
\begin{equation}\label{thescalar}
    \chi = q^{(0) ac} q^{(0) bd} T_{ab} T_{cd} \left( q^{(0)}_{ij} \xi^i \xi^j \right)^3.
\end{equation}
It is easy to notice that $\chi$ is a smooth function satisfying
\begin{equation}
    \mathcal{L}_\xi \chi = 0 \label{const_funct}.
\end{equation}
Thus, it is constant along the integral lines of $\xi$. As we already emphasized, we are interested in local symmetries and so let us assume that \eqref{const_funct} holds in a ball of radius $\epsilon > 0$ around the fixed point of $\xi$. The following useful lemma holds:
\begin{lemma} \label{lemat}
Let $x_0$ be a fixed point of the CKVF $\xi$.
For any open neighborhood $x_0\in U$, there exists an open subset $x_0\in O\subset U$ such that for any $x\in O$
either the integral line of $\xi$ starting at $x$ or the integral line ending at this point is fully contained in $O$. Moreover, if $\xi$ is an essential CKVF, the same starting or ending line approaches the fixed point asymptotically.
\end{lemma}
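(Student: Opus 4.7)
The plan is to linearize $\xi$ at $x_0$ and read off the local dynamics from the conformal Killing equation. In $q^{(0)}$-orthonormal coordinates at $x_0$, the relation $\mathcal{L}_\xi q^{(0)} = 2\omega q^{(0)}$ forces $A := D\xi(x_0)$ to take the form $A = c\,\mathrm{Id} + B$ with $c := \omega(x_0)$ and $B$ skew, so every eigenvalue of $A$ has the same real part $c$. The argument then splits by $c$, and when $c = 0$, by essentiality.

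If $c \neq 0$, after replacing $\xi$ with $-\xi$ if necessary we take $c > 0$; the squared distance $V := \mathrm{dist}_{q^{(0)}}(\,\cdot\,, x_0)^2$ satisfies $\dot V = 2cV + O(V^{3/2})$ along the $\xi$-flow (the skew part drops out of $\langle x, Bx\rangle$), so on a small enough geodesic ball $O \subset U$ one has $\dot V \geq cV > 0$ off $x_0$. Hence the integral line ending at any $x \in O$ remains in $O$ and converges to $x_0$, and $c < 0$ is symmetric with forward lines. This case is automatically essential: $\mathcal{L}_\xi \ln\tilde\omega|_{x_0} = 0$ can never match $-c \neq 0$ for a smooth $\tilde\omega$. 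In the non-essential $c = 0$ case, pick the smooth positive $\tilde\omega$ from the definition, so that $\xi$ is Killing for $\tilde q := \tilde\omega^2 q^{(0)}$; then the squared $\tilde q$-distance to $x_0$ is a first integral, every orbit lies on a $\tilde q$-sphere, and taking $O$ to be a small $\tilde q$-geodesic ball traps each orbit through $O$ in both time directions.

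The main obstacle is the essential $c = 0$ case, where the spectrum of $A$ is purely imaginary and linearization gives no trapping. Using the local conformal flatness from \cite{frances2012local} we work in Euclidean coordinates with $x_0 = 0$ and write $\xi = Bx + 2(b\cdot x)x - |x|^2 b$ for skew $B$ and a vector $b$. Matching the first order of the defining equation $\mathcal{L}_\xi \ln\tilde\omega = -2(b\cdot x)$ at the origin yields $B\eta = 2b$ for $\eta := \nabla\ln\tilde\omega(0)$, whose solvability is precisely $b \in \image(B)$; in that case a direct power-series iteration over the polynomial $\xi$ completes $\tilde\omega$ and renders $\xi$ non-essential. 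Essentiality therefore forces $b \notin \image(B)$, i.e., $b$ has a nonzero component $b_\parallel \in \ker B$. The inversion $y = x/|x|^2$ conjugates $\xi$ to the Euclidean isometry $\xi|_y = By - b$, whose orbits decompose into a bounded rotation $\dot y_\perp = By_\perp - b_\perp$ inside $\image(B)$ and the translation $\dot y_\parallel = -b_\parallel$ along $\ker B$. Hence $|y(t)| \to \infty$ as $|t| \to \infty$, so $|x(t)| = |y(t)|^{-1} \to 0$ in both time directions; taking $O$ to be a sufficiently small Euclidean ball around $0$, every $x \in O$ has at least one of its forward or backward integral lines contained in $O$ and convergent to $x_0$. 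The delicate step is the identification non-essential $\Leftrightarrow b \in \image(B)$, which converts an integrability question into the algebraic condition driving the $y$-space dynamics.
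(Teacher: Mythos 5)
Your overall route is the same as the paper's (essentiality at $x_0$ forces local conformal flatness via \cite{frances2012local}, then one classifies flat--space CKVFs vanishing at $x_0$ and studies the flow after inversion), and your $c\neq 0$ case and your non-essential case are fine. But the essential $c=0$ case — which is the heart of the lemma — has two genuine gaps. First, you need the implication ``$b\in\image(B)\Rightarrow\xi$ non-essential'' (so that essentiality forces $b_\parallel\neq 0$); your first-order matching $B\eta=2b$ only gives the \emph{other} direction, and the asserted ``direct power-series iteration'' is not a proof: at order $k$ one must solve $(Bx)^i\partial_i u_k=(\mathrm{known})_k$ on homogeneous polynomials, and this rotational derivation has a nontrivial kernel and cokernel (rotation-invariant polynomials), so order-by-order solvability is not automatic; moreover even a formal series would not produce the smooth, nowhere-vanishing $\omega$ that non-essentiality requires. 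The statement is true, but the short argument is geometric, and it is in effect the paper's normal-form step ($r^i{}_jK_i=0$): if $b=Ba$, conjugate by the special conformal transformation which, in the inverted coordinates, is the translation by $a$; this local conformal diffeomorphism fixes $x_0$ and sends $\xi$ to the linear rotation $B x^i\partial_i$, which preserves $\delta_{ab}$, so $\xi$ preserves the pullback metric $\omega^2\delta_{ab}$ with $\omega$ smooth and positive near $x_0$.

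Second, and more importantly, you never establish the actual assertion of the lemma, namely that one half-orbit is \emph{fully contained in $O$}; you only prove $|y(t)|\to\infty$, i.e.\ asymptotic convergence $x(t)\to x_0$. With $b_\perp\neq0$ kept in the dynamics, the claim that a sufficiently small Euclidean ball works is in fact false: in the inverted picture the orbits are circles about the shifted axis $a+\ker B$ superposed with the drift $-t\,b_\parallel$, and a point with $y_\parallel=0$ and $|y_\perp|$ just above $\epsilon^{-1}$ sitting on the far side of its circle passes, within one rotation period and in \emph{both} time directions, near the near side of the circle, where $|y_\perp|$ drops by about $2|a|$ (a change of order $|a|/\epsilon$ in $|y|^2$), while the drift changes $|y|^2$ only by an $\epsilon$-independent amount per period; so for small $\epsilon$ neither half-orbit stays in $\{|y|>\epsilon^{-1}\}$. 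The cure is exactly the missing normalization above: remove $b_\perp$ first, and then exhibit explicitly invariant regions, as the paper does — e.g.\ with $\dot y=By-b_\parallel$ one has $\frac{d}{dt}(b_\parallel\cdot y)=-|b_\parallel|^2$ and $\frac{d}{dt}|y|^2=-2\,b_\parallel\cdot y$, so $\{b_\parallel\cdot y\leq0,\ |y|>\epsilon^{-1}\}$ is preserved by the forward flow (and its mirror by the backward flow), which gives both the trapping and $|y|\to\infty$. Your $c\neq0$ case already contains the analogous invariance statement ($\dot V\geq cV$), so only the essential $c=0$ case needs repair.
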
 
\begin{proof}
If the metric is not conformally flat, then the point $x_0$ is not essential and there exists a metric $q^{(1)}$ which is preserved by $\xi$. The ball $B(x_0,\epsilon)$ is preserved by the flow of the Killing vector field. We will now focus on the conformally flat case. \\
Let us introduce cartesian coordinates centered around $x_0$. Since we know the form of $q^{(0)}$, we know its CKVFs as well:
\begin{equation}
    \xi = \left( p^i + r^i_{\ j} x^j +  S x^i  + 2K_j x^j x^i - x_j x^j K^i \right) \partial_{x^i},
\end{equation}
where $p^i, r^i_{\ j}, S, K^i$ are covariantly constant and $r_{ij}$ is antisymmetric. Since $0$ is a zero of $\xi$, we put $p^i = 0$. Moreover, we can always put $\xi$ into one of the following three standard forms:
\begin{enumerate}
    \item $\xi=\left( r^i_{\ j} x^j +  2K_j x^j x^i - x_j x^j K^i \right) \partial_{x^i}$ where $r^i_{\ j}K_i=0$ and the vector $K\not=0$
    \item $\xi = \left(r^i_{\ j} x^j +  S x^i \right) \partial_{x^i}$, where $S\not=0$
    \item $\xi = r^i_{\ j} x^j \partial_{x^i}$.
\end{enumerate}
We can restrict attention to vectors in these standard forms. If $\xi=r^i_{\ j} x^j \partial_{x^i}$ then it is not essential since it preserves $q^{(0)}$. Its integral line are circle around $x_0$ and thus are surely contained in a small ball. For the two remaining vector fields, we can show that the thesis of the lemma holds for $|x|\leq \epsilon$ for any $\epsilon>0$. In order to show this,
we apply inversion. We then need to show that either the forward or the backward flow for a given $\xi'$ preserves $|x|>\epsilon^{-1}$ and converges to infinity.
\begin{enumerate}
    \item For the vector field $\xi'=\left( r^i_{\ j} x^j +  S x^i \right) \partial_{x^i}$, we have
    $\Lie_\xi |x|=S$ thus $|x|\rightarrow \infty$ for either the forward flow ($S>0$) or the backward flow ($S<0$). Moreover the space $|x|>\epsilon^{-1}$ is preserved.
    \item In case of the vector field
    $\xi'=\left( K^i + r^i_{\ j} x^j  \right) \partial_{x^i}$ then 
    \begin{equation}
        \Lie_\xi K\cdot x=|K|^2>0,\quad \Lie_\xi |x|^2=2K\cdot x.
    \end{equation}
    The space $K\cdot x\geq 0$ and $|x|>\epsilon^{-1}$ is preserved by the forward flow and moreover $|x|\rightarrow \infty$. Similarly, for the space $K\cdot x\leq 0$ and $|x|>\epsilon^{-1}$.
\end{enumerate}
\end{proof} \noindent
Since $\chi$ vanishes at the fixed point and is constant along every integral line, it vanishes in the whole ball and so does $T_{ab}$ (since that fixed point is an isolated zero of $\xi$). Thus, there is a neighborhood of the fixed point, on which the initial data are those of the de Sitter spacetime. Since the Cauchy problem is well-posed, its past development is isometric to the de Sitter spacetime. Let us formulate this result as a theorem:
\begin{theorem}\label{Theorem}
    Let $H$ be a Killing horizon which intersects the scri $\mathcal{I}^+$  defined by $\rho=0$ in a (conformal completion) of a spacetime endowed with a physical metric tensor (\ref{gFG}) that satisfies the vacuum Einstein equations (\ref{EL}). Then, there is a neighborhood of that intersection in which the physical metric is isometric to the de Sitter one.
\end{theorem}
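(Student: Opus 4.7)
The plan is to assemble the pieces already laid out in the excerpt into a four-step deduction: localise the intersection to a point $x_0\in\mathcal{I}^+$, rule out non-essential CKVFs, exploit essentiality via Lemma~\ref{lemat} to kill the holographic stress tensor near $x_0$, and close the argument with Friedrich's well-posedness.

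\textbf{Step 1: Reduction to a point.} Since $X$ is tangent to $\mathcal{I}^+$, with respect to the unphysical metric $\hat g$ it is spacelike wherever nonzero, while on the Killing horizon $H$ it is null. Hence $X$ vanishes on $H\cap\mathcal{I}^+$. Combined with the area argument given above (cross-sections of $H$ in $g$ have constant area, in $\hat g$ the area scales as $\rho^2$ and so collapses at $\rho=0$), each connected component of $H\cap\mathcal{I}^+$ is a single point $x_0$, and $\xi:=X|_{\mathcal{I}^+}$ is a CKVF of $q^{(0)}$ with $\xi(x_0)=0$.

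\textbf{Step 2: $\xi$ must be essential at $x_0$.} If $\xi$ were a genuine Killing vector of $q^{(0)}$, formula \eqref{killing} would give $X=\xi^a\partial_a$, which is spacelike in the whole Fefferman--Graham chart, contradicting the existence of $H$. More generally, if $\xi$ is non-essential, pick $\omega>0$ with $\mathcal{L}_\xi(\omega^2 q^{(0)})=0$ and use the conformal freedom \eqref{data} to pass to the equivalent data $(\omega^2 q^{(0)},\omega^{-1}q^{(3)})$; in the resulting (locally diffeomorphic) development $X$ is again spacelike, excluding the horizon. So $\xi$ is essential at $x_0$, and by the theorem of Frances cited above $q^{(0)}$ is conformally flat near $x_0$. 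After a further conformal rescaling I may assume $q^{(0)}_{ab}=\delta_{ab}$ on a ball $B(x_0,\epsilon)$ centred at $x_0=0$.

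\textbf{Step 3: Vanishing of $T_{ab}$ on a neighbourhood of $x_0$.} Consider the scalar $\chi$ of \eqref{thescalar}. It is built so that the conformal weights of $T_{ab}$ and $\xi$ cancel against those of $q^{(0)}$, yielding $\mathcal{L}_\xi\chi=0$, i.e.\ $\chi$ is constant along the flow of $\xi$. Since $\xi(x_0)=0$ and $\chi(x_0)=0$, $\chi$ vanishes identically on every integral line that accumulates at $x_0$. Lemma~\ref{lemat} is precisely what propagates this: there exists an open neighbourhood $O\ni x_0$ such that each $x\in O$ lies on either a forward or backward $\xi$-orbit asymptoting to $x_0$. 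Hence $\chi\equiv0$ on $O$. Because $\xi$ is essential, its zero at $x_0$ is isolated (as noted in the excerpt), so $q^{(0)}_{ij}\xi^i\xi^j>0$ on a punctured neighbourhood; the definition of $\chi$ then forces $T_{ab}\equiv 0$ on that punctured neighbourhood, and by smoothness on a full neighbourhood of $x_0$.

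\textbf{Step 4: Identification with de Sitter.} The initial data $(\delta_{ab},0)$ on an open subset of $\mathcal{I}^+$ are precisely the data of pure de Sitter. By the well-posedness of Friedrich's asymptotic Cauchy problem at $\mathcal{I}^+$, the past development of this neighbourhood is isometric to the corresponding piece of de Sitter, which proves the theorem. \emph{The principal obstacle is Step 3}: a single invariant scalar like $\chi$ only captures the norm of $T_{ab}$, and it degenerates at the fixed point where $\xi=0$, so pointwise vanishing at $x_0$ is tautological; the nontrivial content is the dynamical statement of Lemma~\ref{lemat} that essential conformal vector fields on flat space admit $\xi$-orbits sweeping out a full neighbourhood of their zero, which is what promotes the pointwise vanishing to an open one.
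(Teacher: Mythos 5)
Your proposal is correct and follows essentially the same route as the paper: tangency of $X$ to $\mathcal{I}^+$ and the area/vanishing argument to localise the intersection, the non-essential/essential dichotomy with Frances' conformal-flatness result, the invariant scalar $\chi$ of \eqref{thescalar} together with Lemma~\ref{lemat} to force $T_{ab}=0$ near the fixed point, and Friedrich's well-posedness to identify the development with de Sitter. The only cosmetic difference is that you assert the intersection is a point already in Step 1, whereas the paper first only excludes a $2$-surface and obtains the isolated-point conclusion from essentiality; your Step 2--3 supply exactly that ingredient, so nothing is missing.
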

Our proof that $T_{ab}$ vanishes can be easily generalized to show that there is no tensor $S_{ab}$ satisfying
    \begin{equation}
        \mathcal{L}_\xi S_{ab} = \frac{s}{3} D_m \xi^m S_{ab}
    \end{equation}
    when $s < 2$. To this end, we just need to consider the following function:
    \begin{equation}
        \chi = S^{ab} S_{ab} \left( \delta_{ij} \xi^i \xi^j \right)^{2 -s}
    \end{equation}
    which is (at least) continuous at the origin, smooth everywhere else, and is constant along the integral lines of $\xi$.
\section{Local symmetries}

\noindent We will now extend the characterization of the essential conformal structures to the case with tensor $T_{ab}$. Let ${\mathcal V}$ be the algebra of CKVF symmetries of the neighborhood of $x\in{\mathcal I}^+$, that is $\forall \xi\in{\mathcal V}$
\begin{align}
    \mathcal{L}_\xi q^{(0)}_{ab} &= 2 \alpha_\xi q^{(0)}_{ab} \\
    \mathcal{L}_\xi T_{ab} &= - \alpha_\xi T_{ab}, \label{cond_T}
\end{align}
We introduce
\begin{df}
Algebra of symmetries ${\mathcal V}$ is non-essential at $x$ if there exists an open neighborhood of $x$ and a conformal factor $\omega^2$ such that
\begin{equation}
    \mathcal{L}_\xi \omega^2q^{(0)}_{ab} =0,\quad
    \mathcal{L}_\xi \omega^{-1}T_{ab} =0
\end{equation}
Otherwise, we call it essential.
\end{df} \noindent
Our goal is to extend a result of \cite{frances2012local}. Namely,

\begin{prop}\label{lm:V-X}
Suppose ${\mathcal V}$ is essential at $x$. Then, there exists $X\in {\mathcal V}$ which is essential at $x$.
\end{prop}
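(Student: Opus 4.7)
The plan is to argue by contraposition: assume that every $X \in \mathcal{V}$ is non-essential at $x$ and construct a single conformal factor $\omega$ witnessing non-essentiality of $\mathcal{V}$. First, a short algebraic reduction eliminates $T_{ab}$ from the problem. Using the defining identities $\mathcal{L}_\xi q^{(0)} = 2\alpha_\xi q^{(0)}$ and $\mathcal{L}_\xi T = -\alpha_\xi T$ for $\xi \in \mathcal{V}$, a direct computation gives
\begin{equation*}
\mathcal{L}_\xi(\omega^2 q^{(0)}_{ab}) = 2\omega^2(\xi(\ln\omega) + \alpha_\xi)\, q^{(0)}_{ab}, \qquad \mathcal{L}_\xi(\omega^{-1} T_{ab}) = -\omega^{-1}(\xi(\ln\omega) + \alpha_\xi)\, T_{ab},
\end{equation*}
so both conditions in the definition of algebra non-essentiality collapse into the single PDE $\xi(\ln\omega) + \alpha_\xi = 0$. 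In particular, individual non-essentiality in the combined sense coincides with the earlier (purely conformal) definition, and $\mathcal{V}$ is non-essential iff a single $\omega$ makes every $\xi \in \mathcal{V}$ simultaneously Killing for $\omega^2 q^{(0)}$. The proposition thereby reduces to the subalgebra analogue of Frances's theorem: if every $X \in \mathcal{V}$ is conformally non-essential at $x$, so is $\mathcal{V}$.

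I would then split on the local conformal geometry of $q^{(0)}$ at $x$. If $q^{(0)}$ is not conformally flat on any neighborhood of $x$, there is a nontrivial scalar conformal invariant $I$ of $q^{(0)}$ (e.g.\ the squared norm of the Weyl tensor for $n \geq 4$ or of the Cotton--York tensor for $n = 3$, with higher-order invariants as a fallback when these vanish at $x$), of some nonzero conformal weight $w$. The CKVF equation then propagates to $\xi(I) = -2 w\, \alpha_\xi\, I$, which rearranges to $\xi(\ln I^{1/(2w)}) = -\alpha_\xi$; hence $\omega := I^{1/(2w)}$ satisfies $\xi(\ln\omega) + \alpha_\xi = 0$ for \emph{every} $\xi \in \mathcal{V}$, supplying the required common conformal factor and contradicting the assumed essentiality of $\mathcal{V}$. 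If instead $q^{(0)}$ is conformally flat near $x$, I would pass to Cartesian coordinates with $q^{(0)} = \delta$, realize $\mathcal{V}$ as a Lie subalgebra of the Möbius algebra on $\mathbb{R}^n$, and use an order-by-order analysis of $\xi(\ln\omega) = -\alpha_\xi$ at the fixed point (the same technique already used in Lemma~\ref{lemat}) to conclude that individual non-essentiality forces $\alpha_\xi(x) = 0$ together with a vanishing ``special-conformal'' second-jet for each $\xi$ with $\xi(x) = 0$. The isotropy $\mathcal{V}_x$ is then generated by rotations at $x$, and a rotation-invariant $\omega$ about $x$, made constant along any translation generators present in $\mathcal{V}$, yields the common factor.

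The main obstacle is the conformally flat case: turning the local algebraic constraint ``the isotropy acts only by rotations, and the remainder by Killing motions'' into a single smooth positive $\omega$ compatible with all generators of $\mathcal{V}$. The generic configuration is handled directly by the rotation/translation template, but degenerate subalgebras of the Möbius algebra---in particular those mixing inversion-type and isotropy generators in nonstandard ways---may require either explicit case analysis or an appeal to a local classification of CKVF subalgebras of $\mathfrak{so}(n+1,1)$ admitting a common conformal representative in the spirit of Alekseevsky--Kiosak.
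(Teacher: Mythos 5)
Your reduction of the two conditions to the single equation $\xi(\ln\omega)+\alpha_\xi=0$ is correct (and is implicitly how the paper treats $T_{ab}$: its Lemma \ref{lm:proper} only ever constructs an invariant metric), and the contrapositive framing is legitimate. But both branches of your case split have genuine gaps. In the non-conformally-flat branch, the choice $\omega=I^{1/(2w)}$ requires the scalar conformal invariant $I$ to be nonvanishing on a neighborhood of $x$, and nothing guarantees this: in three dimensions the only obstruction to conformal flatness is the Cotton tensor, and it --- together with every scalar invariant built from it and its derivatives --- can vanish at $x$, even to infinite order, while the metric fails to be conformally flat on \emph{every} neighborhood of $x$. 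So there is no ``fallback hierarchy'' of invariants that rescues the construction at such a point. The paper never needs this device for the local statement: it disposes of the non-conformally-flat case through the stabilizer, combining Lemma \ref{lm:proper} (if $\mathcal{V}_x$ is non-essential, averaging a cut-off invariant metric over the compact isotropy $K\subset \mathrm{SO}(3)$ and over a slice of the local group yields one metric invariant under all of $\mathcal{V}$ near $x$) with Theorem 7.1 of \cite{frances2012local} (essential isotropy forces conformal flatness). The invariant-as-conformal-factor trick does appear in the paper, but only in the global theorem, where $\chi\neq 0$ everywhere is one horn of a dichotomy rather than something to be verified at a point.

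In the conformally flat branch you stop exactly where the content of Proposition \ref{lm:V-X} lies, as you acknowledge. Note that any $\xi\in\mathcal{V}$ with $\xi(x)\neq 0$ is automatically non-essential at $x$ (solve $\xi(\ln\omega)=-\alpha_\xi$ along the flow from a transversal), so the hypothesis ``every element of $\mathcal{V}$ is non-essential'' really only constrains $\mathcal{V}_x$; the whole difficulty is to promote per-element invariant metrics --- each obtained after a possibly different M\"obius conjugation, and even the claim that non-essential isotropy elements have no dilation or special-conformal part needs the dynamical argument of Lemma \ref{lemat} or of \cite{frances2012local} --- to a single metric invariant under the full algebra, including generators not vanishing at $x$. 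Your ``rotation-invariant $\omega$ made constant along translation generators'' handles only a special configuration. The two missing ingredients are exactly what the paper supplies: Lemma 7.2 of \cite{frances2012local}, giving a common fixed point so that all individually non-essential isotropy fields can be conjugated \emph{simultaneously} into the Euclidean algebra, and the averaging construction of Lemma \ref{lm:proper}, which then produces the common $\omega$ for arbitrary subalgebras without any classification of subalgebras of the M\"obius algebra. Without these, or a substitute for them, the proposal does not close.
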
 \noindent
Let us remark, that it is not obvious that essentiality could in principle follows from properties of the whole algebra. Our proof of Proposition \ref{lm:V-X} will be based on the following result:
\begin{lemma}\label{lm:proper}
Let ${\mathcal V}$ be an subalgebra of conformal vector fields. Denote by ${\mathcal V}_x$ a stabilizer of $x$ (a subalgebra of vectors  vanishing at $x$). If ${\mathcal V}_x$ is non-essential then it is also true for ${\mathcal V}$.
\end{lemma}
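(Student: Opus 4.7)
First I would translate both invariance conditions into a single scalar PDE for $\omega$. A direct computation using $\mathcal{L}_\xi q^{(0)}_{ab} = 2\alpha_\xi q^{(0)}_{ab}$ and $\mathcal{L}_\xi T_{ab} = -\alpha_\xi T_{ab}$ yields
\[
\mathcal{L}_\xi(\omega^2 q^{(0)}_{ab}) = 2\omega^2\bigl(\xi\log\omega + \alpha_\xi\bigr)\, q^{(0)}_{ab}, \qquad \mathcal{L}_\xi(\omega^{-1} T_{ab}) = -\omega^{-1}\bigl(\xi\log\omega + \alpha_\xi\bigr)\, T_{ab},
\]
so both conditions reduce to the single equation $\xi(\log\omega) + \alpha_\xi = 0$. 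Non-essentiality of ${\mathcal V}$ (respectively ${\mathcal V}_x$) at $x$ therefore amounts to the existence of a local positive $\omega$ solving this PDE for every $\xi$ in the relevant subalgebra.

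My plan is to normalize the data using the hypothesis and then integrate the PDE. Let $\omega_0$ be the conformal factor provided by non-essentiality of ${\mathcal V}_x$. Replacing $(q^{(0)}, T)$ by $(\omega_0^2 q^{(0)}, \omega_0^{-1} T)$ leaves the conformal class (and the essential/non-essential status of every subalgebra) unchanged, and in the new data every $Z\in{\mathcal V}_x$ satisfies $\alpha_Z\equiv 0$ on a neighborhood of $x$. It is enough, then, to produce $\omega$ with $\xi(\log\omega) = -\alpha_\xi$ for all $\xi\in{\mathcal V}$ in this normalized setting.

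For the integration I work on the product $U\times\mathbb{R}_+$ (with $s$ the coordinate on $\mathbb{R}_+$) and lift each $\xi\in{\mathcal V}$ to $\tilde\xi := \xi - s\alpha_\xi\partial_s$. A short calculation, using the cocycle identity $\alpha_{[\xi_1,\xi_2]} = \xi_1\alpha_{\xi_2} - \xi_2\alpha_{\xi_1}$ (itself a consequence of $\mathcal{L}_{[\xi_1,\xi_2]} q^{(0)} = [\mathcal{L}_{\xi_1}, \mathcal{L}_{\xi_2}] q^{(0)}$), yields $[\tilde\xi_1, \tilde\xi_2] = \widetilde{[\xi_1,\xi_2]}$, so $\xi\mapsto\tilde\xi$ is a Lie algebra homomorphism and the distribution $\widetilde{{\mathcal V}}$ is involutive. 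The normalization $\alpha_Z\equiv 0$ on ${\mathcal V}_x$ forces $\tilde Z(x,1) = Z(x) = 0$ for every $Z\in{\mathcal V}_x$, so the rank of $\widetilde{{\mathcal V}}$ at $(x,1)$ equals $\dim({\mathcal V}/{\mathcal V}_x)$ and $\partial_s$ is transverse to it. Integrating the distribution then yields an integral submanifold through $(x,1)$ whose projection to $M$ is locally the graph of the desired $\omega$, and the defining equation $\tilde\xi$-tangency along this graph is precisely $\xi\log\omega = -\alpha_\xi$.

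The main obstacle is this last step. The leaf of $\widetilde{{\mathcal V}}$ through $(x,1)$ a priori gives $\omega$ only on the ${\mathcal V}$-orbit through $x$, whereas we need it on a full neighborhood of $x$ in $M$ where the ${\mathcal V}$-orbits can have strictly larger dimension than at $x$. I would resolve this either by invoking the Stefan--Sussmann theorem for singular involutive distributions or, more concretely, by picking vector fields $Y_1,\ldots,Y_k\in{\mathcal V}$ whose values at $x$ span $T_x{\mathcal O}$, prescribing $\omega$ on a slice $\Sigma\ni x$ transverse to the orbit, and extending by integration of the ODEs $Y_i\log\omega = -\alpha_{Y_i}$ along the $Y_i$-flows. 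Path independence (for flows staying in a small neighborhood) is precisely the involutivity of $\widetilde{{\mathcal V}}$, while the compatibility with the constraints $Z\log\omega = 0$ for $Z\in{\mathcal V}_x$ is guaranteed by the vanishing $\alpha_Z\equiv 0$ secured in the normalization step.
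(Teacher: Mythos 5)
Your reduction is fine: both invariance conditions collapse to the single equation $\xi(\log\omega)+\alpha_\xi=0$, the normalization by the $\mathcal{V}_x$-invariant representative is legitimate, and the cocycle identity making $\xi\mapsto\tilde\xi=\xi-s\,\alpha_\xi\partial_s$ a Lie algebra homomorphism is correct. But the step you yourself flag as the main obstacle is where the proof actually lives, and neither of your proposed fixes closes it. The lifted distribution $\widetilde{\mathcal{V}}$ has non-constant rank: at points $y$ arbitrarily close to $x$ the $\mathcal{V}$-orbit can be strictly larger than at $x$, and $y$ can have its own isotropy $\mathcal{V}_y\not\subset\mathcal{V}_x$. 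For a solution $\omega$ to exist on a full neighborhood you need, at the very least, $\alpha_W(y)=0$ for every such $y$ and every $W\in\mathcal{V}_y$ (otherwise $\tilde W(y,s)=-s\,\alpha_W(y)\partial_s\neq0$ and the lifted orbit through $(y,s)$ contains the fiber direction, so no invariant graph exists near $y$). The hypothesis only controls the isotropy at $x$ itself, and nothing in your argument rules this scenario out; establishing it is essentially the content of the lemma. Correspondingly, "path independence is precisely the involutivity of $\widetilde{\mathcal{V}}$" is not true for a singular distribution: Frobenius requires constant rank, and Stefan--Sussmann only gives you the orbit through $(x,1)$, which projects onto the $\mathcal{V}$-orbit of $x$ and says nothing off that orbit. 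Likewise, $\alpha_Z\equiv0$ for $Z\in\mathcal{V}_x$ turns the constraint into "$\omega$ is $\mathcal{V}_x$-invariant", but an $\omega$ built by prescribing free data on a slice and integrating along the $Y_i$-flows has no reason to be $\mathcal{V}_x$-invariant; the compatibility between the $\mathcal{V}_x$-constraints and the $Y_i$-equations (governed by the brackets $[Z,Y_i]$) is exactly the nontrivial consistency you would still have to prove.

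The paper's proof supplies the missing mechanism by working at the group level rather than solving the overdetermined PDE directly: it integrates $\mathcal{V}_x$ to a local group whose effective quotient $K$ is compact (a subgroup of $SO(3)$ via the isotropy representation in the $\mathcal{V}_x$-invariant metric $g^0$), cuts off $g^0$ to a $K$-invariant tensor $g^1$ supported in a small ball, and then averages the pullbacks $\left(\exp l\cdot k\right)^*g^1$ over $K$ and over a transversal $\{l\in L:\ |l|\leq\epsilon_4\}$ with the left Haar measure. The chain of $\epsilon$'s and $\eta$'s is chosen precisely so that when one differentiates this integral along the flow of any $\xi\in\mathcal{V}$, only boundary terms at $|l|=\epsilon_4$ appear, and those terms vanish on a small ball around $x$ because $\exp l\cdot k$ moves that ball off itself; positivity near $x$ comes from the terms with small $|l|$, and each term is a nonnegative multiple of $g^0$, so the average stays in the conformal class. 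Compactness of the isotropy and this averaging-with-cutoff are what guarantee solvability at the nearby points with jumping orbit dimension -- the very points your construction cannot handle -- so as it stands your proposal has a genuine gap rather than an alternative proof.
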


\noindent{Remark}: It is a local version of property of the proper action (see \cite{alekseevskiui1972groups}).

\begin{proof}
Let $\tilde{G}$ be a universal group generated by the algebra ${\mathcal V}$ with a subgroup $\tilde{K}$ generated by ${\mathcal V}_x$. There exists a representative $g^0$ in a conformal class, which is preserved by ${\mathcal V}_x$. 
We can consider a ball in this metric $B(x,\eta_1)$ for small enough $\eta_1$ such that it is inside normal coordinates chart around $x$. As $x$ is a fixed point, it is preserved by ${\mathcal V}_x$. We can integrate the action of the algebra ${\mathcal V}_x$ to the action of a group $\tilde{K}$. Let $H$ be a subgroup of $\tilde{K}$ which acts trivially on this ball. Let us notice that $H$ also acts trivially on ${\mathcal V}$ and thus it is normal as it belongs to the center of $\tilde{G}$. We may consider
\begin{equation}
    G=\tilde{G}/H,\quad K=\tilde{K}/H.
\end{equation}
We notice that $K$ is a compact group (by an injective homomorphism $K\rightarrow {\rm End}(T_xM)$ we can identify it with a subgroup of $SO(3)$).
We will consider local action of $G$. Let $L$ be a complementary subspace to ${\mathcal V}_x$ in ${\mathcal V}$. We equip it with an auxiliary norm. 
\begin{enumerate}
    \item There exists $\epsilon_1>0$ such that
    \begin{equation}
    \{l\in L\colon |l|<\epsilon_1\}\times K\ni(l,k)\rightarrow \exp l\cdot k\in G    
    \end{equation}
     is injective. This is thanks to the compactness of $K$. We denote the pull-back of the left Haar measure by $d\mu$.
    \item There exists $0<\epsilon_2\leq \epsilon_1$ such that $(\exp l)x\in B(x,\eta_1)$ for all $l\in L$ such that $|l|<\epsilon_2$.
    \item As no nonzero vector in $L$ vanishes in $x$, there exists $\epsilon_3<\frac{1}{2}\epsilon_2$ and $\eta_2<\eta_1$ such that 
    $d((\exp l)x,x)>\eta_2$ for $l\in L$ satisfying $\epsilon_3/2<|l|<2\epsilon_3$.
\end{enumerate}
From continuity of the action there exists $\epsilon_4<\epsilon_3$ and $\eta_3<\eta_2$ such that for all $k\in K$ and $l\in L$ satisfying $\epsilon_4/2<|l|<2\epsilon_4$ it holds
\begin{equation}
\exp l\cdot k(B(x,\eta_3))\cap B(x,\eta_3)=\emptyset
\end{equation}
Let us define 
\begin{equation}
    g^1_{ab}(y)=f\left(\frac{d(x,y)}{\eta_3}\right)g^0_{ab}(y)
\end{equation}
where $f$ is a smooth function which is one in zero and vanishes for all argument bigger equal than 1 (and nowhere else). This is a $K$ invariant metric in $B(x,\eta_3)$. We define a metric by integration
\begin{equation}
    g^{2}_{ab}=\int_{l\in L\colon |l|\leq \epsilon_4}\int_K d\mu\ \left(\exp l\cdot k\right)^*g^1_{ab}
\end{equation}
This metric is invariant under ${\mathcal V}$ in some small neighborhood of $x$. In fact the integral changes under the action of the algebra only by boundary terms (due to invariance of the Haar measure). However, the boundary does not contribute to the tensor in small neighbourhood of $x$. Thus ${\mathcal V}$ is non-essential.
\end{proof}

\begin{proof}[Proof of \ref{lm:V-X}]
From Lemma \ref{lm:proper} we know that ${\mathcal V}_x$ is essential. However, then
by Theorem 7.1 from \cite{frances2012local}, the neighborhood of $x$ is conformally flat. If every $X\in {\mathcal V}_X$ is non-essential then after applying inversion $X'$ is an element of Euclidean-Poincare transformation and it has a fixed point.  From Lemma 7.2 of \cite{frances2012local} we then know that there is a common fixed point for all vectors. Applying inversion with respect to this point we see that we transformed all vector fields in ${\mathcal V}$ into Euclidean-Poincare vector fields.
\end{proof}

The main result can be summarized as follows:

\begin{theorem}\label{thm:local}
Let ${\mathcal V}$ be an algebra of local conformal symmetries at $x$ of the data $(q_{ab},T_{ab})$. Then 
\begin{enumerate}
    \item \label{case:1-th-loc} either there exists a choice of non-vanishing $\omega$ in some neighborhood of $x$ such that
    \begin{equation}
        {\mathcal L}_\xi \omega^2q_{ab}=0,\ {\mathcal L}_\xi \omega^{-1}T_{ab}=0,\quad \forall \xi\in{\mathcal V}
    \end{equation}
    \item \label{case:2-th-loc} or the metric is conformally flat and $T_{ab}=0$ in the neighborhood of $x$ and at least one of the vectors in the algebra is essential at $x$.
\end{enumerate}
\end{theorem}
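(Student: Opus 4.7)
The plan is to split the argument into two cases according to whether $\mathcal{V}$ is essential at $x$, and in the essential case to reduce the analysis to a single vector field via Proposition \ref{lm:V-X}. The strategy is to recycle the scalar-invariant technique from the proof of Theorem \ref{Theorem}.

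If $\mathcal{V}$ is non-essential at $x$, then the conclusion in case \ref{case:1-th-loc} is immediate from the definition of non-essentiality for an algebra, with no further work. So the substantive case is when $\mathcal{V}$ is essential at $x$. Invoking Proposition \ref{lm:V-X}, I extract a single $X \in \mathcal{V}$ that is essential at $x$. By Theorem 7.1 of \cite{frances2012local} (already used in Section 3), a neighborhood of $x$ is then conformally flat, so one may pass to Cartesian coordinates in which $q^{(0)}_{ab} = \delta_{ab}$; moreover, as noted in Section 3, essentiality forces $x$ to be an isolated zero of $X$.

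Next, I would transplant the scalar construction from the proof of Theorem \ref{Theorem} and set
\[
    \chi \;=\; \delta^{ac}\delta^{bd} T_{ab} T_{cd}\, \bigl(\delta_{ij} X^i X^j\bigr)^3.
\]
A short weight count shows that the factor $|X|^6$ absorbs exactly the conformal weight of $T^{ab}T_{ab}$, so that $\mathcal{L}_X \chi = 0$, and since $T_{ab}$ is smooth while $|X|^6$ vanishes to order six at $x$, the function $\chi$ is in fact smooth with $\chi(x)=0$. Lemma \ref{lemat} then provides, for every neighborhood $U$ of $x$, a sub-neighborhood $O \subset U$ such that each $y \in O$ lies on an integral line of $X$ that asymptotically approaches $x$; combining constancy along orbits with continuity of $\chi$ gives $\chi \equiv 0$ on $O$. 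Because $X$ has an isolated zero at $x$, this forces $T^{ab}T_{ab} \equiv 0$ and hence $T_{ab}\equiv 0$ on $O\setminus\{x\}$, and then on $O$ by continuity. Together with conformal flatness this is precisely case \ref{case:2-th-loc}.

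The step I expect to be the main obstacle is the algebraic reduction from essentiality of the whole algebra $\mathcal{V}$ to essentiality of some single $X \in \mathcal{V}$: a priori $\mathcal{V}$ could be essential while every individual member were non-essential, with no common conformal factor Killing all of them simultaneously. This is exactly the content of Proposition \ref{lm:V-X}, which itself rests on Lemma \ref{lm:proper} and a compactness argument for the stabilizer subgroup $K$. Once that reduction is secured, the rest of the argument is a direct application of the machinery already developed for the Killing-horizon theorem.
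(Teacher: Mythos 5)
Your proposal is correct and follows essentially the same route as the paper: the non-essential case is immediate from the definition, and in the essential case the paper likewise invokes Proposition \ref{lm:V-X} to get a single essential vector field and then applies Lemma \ref{lemat} together with the conformal-flatness and scalar-$\chi$ argument already developed for Theorem \ref{Theorem}. Your write-up merely spells out in detail what the paper's two-line proof cites implicitly.
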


In the second case we will say that the data is locally de Sitter.

\begin{proof}
Either ${\mathcal V}$ is non-essential at $x$ and then the case \ref{case:1-th-loc} holds or there is an essential vector field in ${\mathcal V}$ and then by Lemma \ref{lemat} and the text below we are in the case \ref{case:2-th-loc}.
\end{proof}

\section{Global symmetries} \noindent
In this section, we will use our just gained knowledge of the initial data and their symmetries to classify what are possible isometries of the asymptotically de Sitter spacetime. To be more precise, we will show that the group of isometries of the asymptotically de Sitter spacetime can be only $0,1,2,3$ and $4$-dimensional, unless the spacetime is locally isomorphic to the de Sitter universe.\\
Let us start with some properties of the algebra of symmetries:
\begin{lemma}\label{lm:4-d}
Suppose that the algebra of conformal symmetries ${\mathcal V}$ is at least $4$  dimensional and non-essential at $x$. Then the dimension of ${\mathcal V}$ is either $4$ or $6$ and the algebra acts locally transitively around $x$. In the case of dimension $6$ the data is locally de Sitter around $x$.
\end{lemma}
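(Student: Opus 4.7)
The plan is to reduce the conformal problem to a genuine Riemannian isometry problem and then apply the classical dimension count for Killing algebras of a $3$--manifold. By the non-essentiality hypothesis at $x$, I first choose a conformal factor $\omega$ on some neighborhood $U\ni x$ such that both $g^{(0)}_{ab}:=\omega^2 q^{(0)}_{ab}$ and $T'_{ab}:=\omega^{-1}T_{ab}$ are Lie-dragged by every $\xi\in{\mathcal V}$. Thus ${\mathcal V}$ consists of honest Killing vector fields of the $3$-dimensional Riemannian metric $g^{(0)}$ that additionally preserve the symmetric tensor $T'$. Since $\dim{\mathcal I}^+=3$, the Killing algebra is a priori at most $6$-dimensional.

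Next, at the point $x$ the stabilizer ${\mathcal V}_x\subset{\mathcal V}$ embeds via the linearization $\xi\mapsto (D\xi)(x)$ into the antisymmetric endomorphisms $\mathfrak{so}(T_xU,g^{(0)}_x)\cong\mathfrak{so}(3)$, because a Killing vector vanishing at a point is determined by its skew derivative there. Subalgebras of $\mathfrak{so}(3)$ exist only in dimensions $0$, $1$ and $3$, with no $2$-dimensional one: any $2$-dimensional compact Lie algebra is abelian, whereas commuting elements of $\mathfrak{so}(3)$ are automatically parallel. Combining this with
\begin{equation}
\dim{\mathcal V}\;=\;\dim(\text{orbit of }x)\;+\;\dim{\mathcal V}_x,\qquad \dim(\text{orbit})\leq 3,
\end{equation}
the only admissible values are $\dim{\mathcal V}\in\{0,1,2,3,4,6\}$. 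Under the assumption $\dim{\mathcal V}\geq 4$ this leaves $4$ or $6$; in both cases the orbit is $3$-dimensional, i.e.\ the action is locally transitive near $x$, as claimed.

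It remains to identify the $6$-dimensional case with locally de Sitter data. Here ${\mathcal V}_x\cong\mathfrak{so}(3)$ acts irreducibly on $T_xU$, so $g^{(0)}$ is a locally maximally symmetric Riemannian $3$-metric---i.e.\ of constant sectional curvature, in particular conformally flat. The $\mathfrak{so}(3)$-invariance of $T'_{ab}$ at $x$ together with Schur's lemma forces $T'_{ab}(x)=c\,g^{(0)}_{ab}(x)$ for some constant $c$, and by local transitivity of ${\mathcal V}$ this proportionality propagates to the whole neighborhood. Reinstating the conformal factor, $q^{(3)}_{ab}$ is a scalar multiple of $q^{(0)}_{ab}$, and the trace constraint \eqref{cons1} enforces $c=0$, hence $T_{ab}\equiv 0$ near $x$. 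This is precisely the de Sitter initial data, so by the well-posedness of the Cauchy problem used throughout the paper the spacetime is locally de Sitter.

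The delicate point is the exclusion of $\dim{\mathcal V}=5$, which rests squarely on the non-existence of a $2$-dimensional subalgebra of $\mathfrak{so}(3)$; everything else is a direct packaging of standard facts about isometries of Riemannian $3$-manifolds together with the trace constraint already imposed on the holographic energy--momentum tensor.
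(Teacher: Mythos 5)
Your overall route is the same as the paper's: use non-essentiality to replace the conformal symmetries by genuine Killing fields of a rescaled metric that also preserve the rescaled $T$, embed the stabilizer ${\mathcal V}_x$ into $\mathfrak{so}(3)$, count dimensions, and in the six-dimensional case use irreducibility together with tracelessness to conclude $T=0$. However, one step fails as written: from ``$\dim{\mathcal V}_x\in\{0,1,3\}$'' and ``orbit dimension $\le 3$'' alone you cannot conclude $\dim{\mathcal V}\in\{0,1,2,3,4,6\}$, nor that the orbit is $3$-dimensional whenever $\dim{\mathcal V}\ge 4$. The bare count still admits $\dim{\mathcal V}=5$ as orbit dimension $2$ plus stabilizer dimension $3$, and $\dim{\mathcal V}=4$ as orbit dimension $1$ plus stabilizer $\mathfrak{so}(3)$, which would contradict the local transitivity you claim. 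The missing ingredient (and the one the paper supplies) is that the evaluation image $Y=\{\xi(x)\colon \xi\in{\mathcal V}\}\subset T_xM$ is invariant under the linear isotropy representation of ${\mathcal V}_x$, since for $\xi\in{\mathcal V}_x$ and $\eta\in{\mathcal V}$ one has $[\xi,\eta](x)=-(\nabla\xi)(x)\,\eta(x)\in Y$. When ${\mathcal V}_x\cong\mathfrak{so}(3)$ this representation is the irreducible standard one, so $Y$ is either $\{0\}$ (whence ${\mathcal V}={\mathcal V}_x$ and $\dim{\mathcal V}=3$, excluded by hypothesis) or all of $T_xM$ (whence $\dim{\mathcal V}=6$). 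Hence under $\dim{\mathcal V}\ge 4$ the only configurations are: ${\mathcal V}_x$ one-dimensional with $Y=T_xM$, giving $\dim{\mathcal V}=4$, or ${\mathcal V}_x\cong\mathfrak{so}(3)$ with $Y=T_xM$, giving $\dim{\mathcal V}=6$; local transitivity follows in both cases. You already invoke the irreducibility of the isotropy action in your six-dimensional analysis, so the repair is immediate, but the dimension count as stated is not sufficient. The remainder of your argument (Schur's lemma forcing $T'\propto g^{(0)}$ at $x$, propagation by transitivity, and the trace constraint forcing the proportionality constant to vanish, hence locally de Sitter data) matches the paper's proof.
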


\begin{proof}
As the algebra is non-essential we can assume it consists of Killing vector fields. We can identify the stabilizer ${\mathcal V}_x\subset \soo (3)$. As a Lie subalgebra, it can be either $3$ (full $\soo (3)$) or $1$ dimensional. \\
In the first case, it acts transitively on $T_xM$. However, it preserves the space 
\begin{equation}
Y=\{X(x)\colon X\in {\mathcal V}\},
\end{equation}
which is nontrivial as dimension of ${\mathcal V}$ is at least $4$. This means that $Y=T_xM$. Counting dimensions shows that $\dim {\mathcal V}=6$. However, this means that the metric is maximally symmetric and conformally flat. Additionally, $T_{ab}=cq_{ab}$ and as it is traceless $T_{ab}=0$. This is the case of locally de Sitter.

On the other hand, if ${\mathcal V}_x$ is one-dimensional then $Y=T_xM$, the dimension of ${\mathcal V}$ is $4$ and the algebra acts locally transitively.
\end{proof}

\begin{theorem}
If the spacetime is not everywhere locally de Sitter, then one of the following holds:
\begin{enumerate}
    \item The connected group of  symmetries is $4$-dimensional. It acts transitively on ${\mathcal I}^+$ and there is a choice of conformal class of the metric for which the action is by isometries.
    \item The group of symmetries is at most $3$-dimensional.
\end{enumerate}
\end{theorem}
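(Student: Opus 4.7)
The plan is to handle the nontrivial case $\dim{\mathcal V}\ge 4$ and to deduce $\dim{\mathcal V}=4$, a transitive action of the connected symmetry group $G$ on $\mathcal{I}^+$, and the existence of a single representative of the conformal data that is literally preserved by ${\mathcal V}$. The pivot of the argument is to show that under the hypothesis of the theorem the open, ${\mathcal V}$-invariant set
\[
    V:=\{x\in\mathcal{I}^+\colon (q^{(0)},T)\text{ is locally de Sitter at }x\}
\]
is empty. By hypothesis $V\neq\mathcal{I}^+$, so $V^c$ is a nonempty closed ${\mathcal V}$-invariant subset. At any $x\in V^c$, Theorem \ref{thm:local} rules out the locally de Sitter alternative, so ${\mathcal V}$ is non-essential at $x$; Lemma \ref{lm:4-d} then applies and yields $\dim{\mathcal V}\in\{4,6\}$ together with local transitivity of the ${\mathcal V}$-action at $x$. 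The value $6$ would force $x$ into $V$, so in fact $\dim{\mathcal V}=4$ and the ${\mathcal V}$-orbit through $x$ is open in $\mathcal{I}^+$. Invariance of $V^c$ keeps this orbit inside $V^c$, hence $V^c$ is open as well as closed; connectedness of $\mathcal{I}^+$ and $V^c\neq\emptyset$ then give $V^c=\mathcal{I}^+$ and $V=\emptyset$.

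Having established non-essentiality of ${\mathcal V}$ at every point of $\mathcal{I}^+$, I would invoke Lemma \ref{lm:4-d} pointwise: all ${\mathcal V}$-orbits are $3$-dimensional, i.e.\ open, so connectedness of $\mathcal{I}^+$ collapses them to a single orbit and produces the transitive action of the $4$-dimensional connected group $G$. The global conformal representative is then assembled from the local ones as follows. Cover $\mathcal{I}^+$ by open sets $\{U_\alpha\}$ on which non-essentiality delivers conformal factors $\omega_\alpha$ with $\omega_\alpha^2 q^{(0)}$ and $\omega_\alpha^{-1}T$ preserved by ${\mathcal V}$; on each connected component of $U_\alpha\cap U_\beta$ the ratio $\omega_\alpha/\omega_\beta$ is a ${\mathcal V}$-invariant positive scalar, hence locally constant by the local transitivity just established, so a constant rescaling of each $\omega_\alpha$ patches them into a single smooth $\omega$ on $\mathcal{I}^+$ for which $\omega^2q^{(0)}$ and $\omega^{-1}T$ are globally ${\mathcal V}$-invariant.

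The main obstacle is the first step, namely showing $V=\emptyset$: without it the a priori possibility ``$\mathcal{I}^+$ locally de Sitter on part of itself and genuinely non-de-Sitter elsewhere'' has to be excluded. What makes the argument go through is that Lemma \ref{lm:4-d} supplies the openness of the ${\mathcal V}$-orbits on $V^c$, which combined with the invariance of $V$ transforms the situation into a clopen-subset problem that connectedness of $\mathcal{I}^+$ resolves. The transitivity claim and the gluing of the $\omega_\alpha$'s are then formal consequences of connectedness and of local transitivity of ${\mathcal V}$.
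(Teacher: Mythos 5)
Your first two steps are sound and essentially reproduce the paper's own argument: the clopen argument showing the locally de Sitter locus $V$ must be empty (the paper runs the same open-and-closed argument, only on the zero set of a scalar rather than on $V$), and the use of Lemma \ref{lm:4-d} together with Theorem \ref{thm:local} to force $\dim\mathcal{V}=4$, openness of all orbits, and hence transitivity by connectedness. The genuine gap is in your final step, the construction of a globally invariant representative of the conformal data. Local constancy of the ratios $\omega_\alpha/\omega_\beta$ only says that the local invariant representatives assemble into a \v{C}ech cocycle with values in locally constant positive functions, i.e.\ a flat $\mathbb{R}_{>0}$-bundle, equivalently a holonomy homomorphism $\pi_1(\mathcal{I}^+)\to\mathbb{R}_{>0}$. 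Your assertion that ``a constant rescaling of each $\omega_\alpha$ patches them into a single smooth $\omega$'' is precisely the claim that this cocycle is a coboundary, and nothing in your argument excludes nontrivial holonomy. Since the theorem imposes no topological restriction on $\mathcal{I}^+$ (it may have $H^1(\mathcal{I}^+;\mathbb{R})\neq 0$, e.g.\ a torus carrying Bianchi I--type data), the patching-by-constants step does not follow as stated.

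The paper sidesteps this by producing the invariant representative canonically instead of by gluing: it introduces the globally defined scalar $\chi=T_{ab}T^{ab}+C_{abc}C^{abc}$ (with $C_{abc}$ the Cotton tensor), which has conformal weight $-6$, shows that its zero set is invariant and open (by the same local-transitivity mechanism you use), hence empty under the hypothesis, and then takes $q'=\chi^{1/3}q^{(0)}$, $T'=\chi^{-1/6}T$, which is a single, manifestly global pair annihilated by every $\xi\in\mathcal{V}$. If you wish to keep your structure, you should replace the gluing step by such a canonical density construction (or otherwise prove the holonomy is trivial); note also that your conclusion $V=\emptyset$ by itself does not give pointwise nonvanishing of a scalar like $\chi$ -- that again requires the invariance-plus-local-transitivity argument applied to its zero set.
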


\begin{proof}
We consider the connected component of the symmetry group and its Lie algebra ${\mathcal V}$.
Let us introduce the following scalar:
\begin{equation}
    \chi = T_{ab} T^{ab} + C_{abc} C^{abc},
\end{equation}
where $C_{abc}$ is a Cotton tensor of $q^{(0)}$. Since the Cotton tensor is a conformal invariant, it follows that $\chi$ has a conformal weight $-6$. If $\chi$ vanishes in a neighborhood of a point $x$ , then the data is locally de Sitter around this point (the metric is conformally flat and $T_{ab}$ vanishes). Suppose now that the point $x$ is non-essential and $\chi(x)=0$. Then by Lemma \ref{lm:4-d} the action of algebra is locally transitive and thus $\chi=0$ in the whole neighborhood of $x$. The same is true if $x$ is an essential point and so the set $\{\chi=0\}$ is open. As both open and closed it needs to be either an empty set or the whole ${\mathcal I}^+$.
\begin{enumerate}
\item If $\{\chi=0\}={\mathcal I}^+$ then the space is everywhere locally de Sitter.
\item If $\chi\not=0$ everywhere, then we can introduce equivalent data:
\begin{equation}
q'_{ab} = \chi^{\frac{1}{3}} q_{ab}, \quad
    T'_{ab} = \chi^{-\frac{1}{6}} T_{ab},    
\end{equation}
which satisfy for every $\xi\in{\mathcal V}$:
\begin{equation}
    \mathcal{L}_\xi q'_{ab} = 0, \quad
    \mathcal{L}_\xi T'_{ab} = 0
\end{equation}
and thus $\xi$ must be KVF of $q'$. From Lemma \ref{lm:4-d} the orbits of every point is the whole ${\mathcal I}^+$ (both orbit and its complement is an open set).
\end{enumerate}
Thus the metric $q_{ab}'$ is preserved by the connected component of the symmetry group.
\end{proof}

\section{Examples} \noindent
Thus far, we have shown that the possible dimensions of the isometry group are $d=4, 3, 2, 1$, assuming that the spacetime is not locally de Sitter. Of course, that does not prove that all of those cases are actually realized. What is left is to construct examples for each of those values. We will divide our discussion into different topologies of $\mathcal{I}^+$ which are commonly encountered. With the exception of $d=4$, we do not claim any sort of completeness.
\subsection{Sphere $S^3$} \noindent
Obviously, the first example that comes to one's mind is the global de Sitter spacetime. It is maximally symmetric and thus $d=10$. We already learned that when $d=4$, symmetries act transitively on $\mathcal{I}^+$ and (in an appropriate conformal frame) are isometries. Thus, $\mathcal{I}^+$ with a metric $q^{(0)}$ must be a homogeneous space. Fortunately, all homogeneous metrics on a simply connected 3-spaces are classified \cite{Patragenaru}. On $S^3$ they are simply given by squashed spheres:
\begin{equation}
    q^{(0)} = \lambda_1 \sigma_1^2 + \lambda_2 \sigma_2^2+\lambda_3 \sigma_3^2,
\end{equation}
where $\sigma_i$ are standard left invariant one-forms on $S^3$. When all $\lambda$ are different, this metric has $SU(2)$ symmetry ($d=3$). When two of them coincide, the symmetry is enlarged to $U(1) \times SU(2) = U(2)$ ($d=4$). Let us focus for a moment on the latter, we can take $\lambda_1 = \lambda_2$. The only holographic energy--momentum tensor consistent with the symmetry is
\begin{equation}
    T = \alpha\left(
    \lambda_1^{-1} \sigma_1^2 + \lambda_1^{-1} \sigma_2^2-2\lambda_3^{-1} \sigma_3^2
    \right).
\end{equation}
This is a two parameter family of initial data parameterized by $\frac{\lambda_1}{\lambda_3}$ and $\frac{\alpha}{\sqrt{\lambda_1}}$. It describes Taub-NUT-de Sitter and the two parameters correspond to the NUT parameter $l$ and mass parameter $m$, respectively. Notice, that $m$ is not a physical mass -- since $\mathcal{I}^+$ is topologically a sphere, any conserved charge
\begin{equation}
    Q_\xi[C] = \oint_C T_{ab}n^a \xi^b \sqrt{h} d^2x
\end{equation}
associated with a symmetry generator $\xi$ and a surface $C$ must vanish identically. Moreover, notice that the Killing horizon in the Killing horizon in this solution is in fact a Cauchy horizon and so it is a breakdown of the unique evolution. Notice that if we allow $\lambda_1 \neq \lambda_2$ the possible $T_{ab}$ are
\begin{equation}
    T_{ab} dx^a dx^b = \sum_{i,j=1}^3 h_{ij} \sigma_i \sigma_j
\end{equation}
subject to the usual constraints. They all correspond to Bianchi IX universes.
\subsection{$\mathbb{R}^3$}
Let us start with a metric $q^{(0)}$ which is not conformally flat. It follows than that (in an appropriate conformal frame) all symmetries acts as isometries. It is well-known that a non-maximally symmetric metric can be at most $4$-dimensional. Thus, one can generate plenty of examples even with $T_{ab}=0$. In particular, all homogeneous metrics (up to accidental additional symmetries) on $\mathbb{R}^3$ could be used and they are already classified \cite{Patragenaru}.
\subsubsection{Euclidean space} \noindent
Let us now discuss the case when the metric induced on $\mathcal{I}^+$ is flat. Then, our starting point is $7$-dimensional group\footnote{Notice that since we discuss the group, we must exlude the special conformal transformations since they move the point at infinity} of $\mathbb{R}^3$.  We want to break symmetry explicitly through the introduction of a non-trivial $T_{ab}$. All complete CKVFs on $\mathbb{R}^3$ are of the form:
\begin{equation}
    \xi = \left( p^i + r^i_{\ j} x^j + S x^i \right) \partial_i.
\end{equation}
As follows from our previous discussions, we must put $S=0$ -- otherwise $T_{ab}$ would have to vanish. We are left only with Killing vectors of the flat metric. Thus, we are now looking for $T_{ab}$ such that
\begin{equation}
    \mathcal{L}_\xi T_{ab} = 0,
\end{equation}
where $\xi$ is in a proper subalgebra of the Euclidean algebra. That subalgebra cannot be $6$ dimensional, since then $T_{ab}$ would be proportional to $\delta_{ab}$ and hence vanishing. Moreover, there are no $5$-dimensional subalgebras. Thus, the smallest possible one is $4$-dimensional and there is only one (up to an isomorphism). It is generated by $\partial_x, \partial_y, \partial_z, x\partial_y - y \partial_z$. The most general $T_{ab}$ it preserves is of the form
\begin{equation}
    T = a dx^2 + a dy^2 - 2a dz^2,
\end{equation}
where $a$ is an arbitrary constant. Such initial data correspond to the Bianchi I cosmology with an additional axial symmetry. \\
There are several $3$-dimensional subalgebras. We obviously have an algebra of translations (isomorphic to $\mathbb{R}^3$). Clearly it preserves $T_{ab}$ of the form
\begin{equation}
    T = h_{ij} dx^i dx^j,
\end{equation}
where $h_{xx}+h_{yy}+h_{zz} = 0$ and all $h_{ij}$ are constant. Such initial data corresponds to (now more general) Bianchi I universes. \\
Different $3$ dimensional algebras are $so(3)$, an euclidean algebra of a plane (spanned by $\partial_x, \partial_y$ and $x\partial_y - y \partial_x$) and a helical algebra ((spanned by $\partial_x, \partial_y$ and $\alpha \partial_z + x\partial_y - y \partial_x$). It is easy to see that the only $T_{ab}$ preserved by these symmetries (and no other) is simply zero. \\
It is also easy to construct examples with lower symmetry. In particular, an algebra $\mathbb{R}^2$ spanned by $\partial_x, \partial_y$ can be obtained by a choice
\begin{equation}
    T = h_{ij}(z) dx^i dx^j,
\end{equation}
where $h_{xx}+h_{yy}+h_{zz} = 0$ and $h_{iz}$ are constants. It is not clear to us whether such solutions occur in any physically interesting scenarios.
\subsection{Cylinder $\mathbb{R} \times S^2$} \noindent
Yet another possible topology of $\mathcal{I}^+$ is a cylinder $\mathbb{R} \times S^2$. This should describe BH spacetimes.
Let us take $q^{(0)}$ to be conformally flat and given by
\begin{equation}
    q^{(0)} = du^2 + \ell^2 d\theta^2 + \ell^2 \sin^2 \theta d\phi^2.
\end{equation}
Among all CKVFs, only $4$-dimensional subalgebra is complete in this case and is generated by $\partial_u$ and rotations. Thus, there are no solutions with more than $4$d isometries. The most general form of $T_{ab}$ consistent with those symmetries is
\begin{equation}
    T = a\left(2 du^2 - \ell^{-2} d\theta^2 - \ell^{-2} \sin^2 \theta d\phi^2\right),
\end{equation}
where $a$ is an arbitrary constant. It clearly corresponds to the Schwarzschild--de Sitter spacetime. The only three dimensional subalgebra is $so(3)$ and again Schwarschild--de Sitter is the only example (as stated by the Birkhoff theorem). The only two dimensional subalgebra is spanned by $\partial_u$ and $\partial_\phi$. If we consider a one-dimensional algebra spanned by $\partial_u$, we can write down the most general form of $T_{ab}$:
\begin{equation}
    T = \ell^{-2} (\mathring{\Delta}+2) \Phi du^2 + 2 \epsilon_{AB}\mathring{D}^B \chi dx^A + (\mathring{D}_A \mathring{D}_B - (\mathring{\Delta} + 1) \mathring{\gamma}_{AB}) \Phi dx^A dx^B,
\end{equation}
where $\Phi$ and $\chi$ are arbitrary ($u$-independent) functions on $S^2$. If they are additionally axially symmetric we are back to the previous case. Notice that Kerr-de Sitter can be put in this form \cite{Kolanowski:2021hwo}.
\section{Discussion}  In this paper we considered asymptotically de Sitter spacetimes that satisfied the vacuum Einstein equations in a neighborhood of the null infinity $\mathcal{I}^+$. We assumed an existence of KVFs in that neighborhood and studied their properties. The key elements of our analysis were the initial Cauchy problem at $\mathcal{I}^+$ and the Fefferman--Graham expansion.\\
The first result is Theorem \ref{Theorem}. It states that if there is a Killing horizon that intersects ${\mathcal I}^+$ (after a conformal completion), then in a neighborhood of the intersection point, the spacetime is isometric to the de Sitter spacetime. We explain now, why the theorem is true.  To begin with, every KVF $X$ admitted by a neighborhood of  ${\mathcal I}^+$ turns out to be tangent to ${\mathcal I}^+$.  Conversely, a vector field $\xi$ tangent to  ${\mathcal I}^+$ is a restriction of a KVF defined in a neighborhood of ${\mathcal I}^+$ if and only it is a symmetry of the data  that determines a solution of Einstein's equations in the neighborhood. The data is a pair:  the induced metric tensor $q^{(0)}$ and holographic stress-energy tensor $T$ defined up to the conformal transformations (\ref{data}). Next, it turns out that the intersection of the Killing horizon with ${\mathcal I}^+$ is an isolated point, a zero of the vector field $\xi$. That follows from the properties of essential CKVF of $3$ dimensional conformal geometries. On the other hand, if $\xi$  were a non-essential CKVF, then the solution $X$ of the Killing equation  that is determined in the neighborhood would be spacelike (hence, without a Killing horizon). The last step of the reasoning is the construction of the scalar $q^{(0) ac} q^{(0) bd} T_{ab} T_{cd} \left( q^{(0)}_{ij} \xi^i \xi^j \right)^3$ that is shown to be zero on a neighborhood of the zero of the symmetry $\xi$. 
\\
We have systematically investigated possible symmetries of the initial data. In particular, we have proven that they exhibit the gap phenomenon with the submaximal symmetry being only $4$-dimensional. We have also shown that this case reduces to the homogeneous geometry on $\mathcal{I}^+$ and as such is much easier to understand. In particular, if the null infinity is topologically a sphere, then the solution is necessarily Taub-NUT-de Sitter. We have also provided a lot of examples in other situations. Hopefully they can be useful as a starting point for the perturbative treatment of the gravitational radiation.

\section*{Acknowledgments}\noindent This work was supported by OPUS 2017/27/B/ST2/02806 of the Polish National Science Center.

\bibliography{bibl.bib}
\end{document}